\theoremstyle{plain}
\newtheorem{theorem}{Theorem}
\theoremstyle{definition}
\theoremstyle{definition}
\newcommand{\ft}{\mathcal{T}_{FT}}
\newcommand{\ftext}{\mathcal{T}_{FT}^+}
\newcommand{\Sup}{\mathcal{S}}
\newcommand{\gtrm}{\mathcal{T}(\Sigma)}
\newcommand{\eql}{\approx}
\newcommand{\neql}{\not\approx}
\newcommand{\setof}[1]{\{#1\}}
\newcommand{\xone}[2]{#1_1,\ldots,#1_{#2}}
\newcommand{\emptyclause}{\Box}
\newcommand{\isS}{\mathbb{I}}
\newcommand{\Sub}{\mathit{Sub}}
\newcommand{\Bin}{\mathit{Bin}}
\newcommand{\leaf}{\mathit{leaf}}
\newcommand{\node}{\mathit{node}}
\renewcommand{\implies}{\rightarrow}
\begin{document}
\toappear{To appear in \textit{Proceedings of the 44th ACM SIGPLAN Symposium on Principles of Programming Languages}. ACM, 2017.}

\setlength{\pdfpageheight}{\paperheight}
\setlength{\pdfpagewidth}{\paperwidth}

\conferenceinfo{CONF 'yy}{Month d--d, 20yy, City, ST, Country}
\copyrightyear{2015}
\copyrightdata{978-1-nnnn-nnnn-n/yy/mm}
\copyrightdoi{nnnnnnn.nnnnnnn}



\title{Coming to Terms with Quantified Reasoning}

\authorinfo{Laura Kov\'acs}
          {TU Wien, Austria} {laura.kovacs@tuwien.ac.at}
\authorinfo{Simon Robillard}
          {Chalmers Univ. of Technology, Sweden}
          {simon.robillard@chalmers.se}
\authorinfo{Andrei Voronkov}
           {University of Manchester, UK\\Chalmers Univ. of Technology, Sweden}
         {andrei@voronkov.com}

\maketitle

\begin{abstract}
The theory of finite term algebras provides a natural framework to
describe the semantics of functional languages. The ability to
efficiently reason about term algebras is essential to automate
program analysis and verification for functional or imperative
programs over algebraic data types such as lists and
trees. However, as the theory of finite term algebras is not finitely
axiomatizable, reasoning about quantified properties over term
algebras is challenging. 

In this paper we address full first-order
reasoning about properties of programs manipulating term algebras, and
describe two approaches for doing so by using first-order theorem
proving.  Our first method is a conservative extension of the theory
of term algebras using a finite number of statements, while our second
method relies on extending the superposition calculus of
first-order theorem provers with additional inference rules. 

We implemented our work in the first-order theorem prover Vampire and
evaluated it on a large number of algebraic data type benchmarks, as
well as game theory constraints. Our
experimental results show that our methods are able to find proofs for many
hard problems previously unsolved by state-of-the-art methods. We also
show that Vampire implementing our methods outperforms existing SMT
solvers able to deal with algebraic data types.

\end{abstract}

\category{CR-number}{subcategory}{D 2.4 Software/Program Verification,
  F.3.1 Specifying and Verifying and Reasoning about Programs, F.3.2
  Semantics of Programming Languages, 
  F 4.1. Mathematical Logic,  I.2.3 Deduction and Theorem Proving, I.2.4 Knowledge Representation Formalisms and Methods } 


\keywords
Program analysis and verification, algebraic data types, automated reasoning, first-order theorem
proving, superposition proving

\section{Introduction}
\label{sec:intro}
Applications of program analysis and verification often require
generating and proving properties about algebraic data types, such as
lists and trees. These data types (sometimes also called recursive or
inductive data types) are special cases of term algebras, and hence
reasoning about such program properties requires proving in the
first-order theory of term algebras.  Term algebras are of particular
importance for many areas of computer science, in particular program
analysis. Terms may be used to formalize the semantics of programming
languages~\cite{goguen1977initial,clark1978negation,courcelle1983fundamental};
they can also themselves be the object of computation. The latter is
especially obvious in the case of functional programming languages,
where algebraic data structures are manipulated. Consider for example
the following declaration, in the functional language ML:
\renewcommand{\ttdefault}{pcr}
\begin{equation*}
  \texttt{\textbf{datatype} nat = zero | succ of nat;}
\end{equation*}
Although the functional programmer calls this a data type declaration,
the logician really sees the declaration of an (initial) algebra whose
signature is composed of two symbols: the constant $zero$ and the
unary function $succ$. The elements of this data type/algebra are all
ground (variable-free) terms over this signature, and programs
manipulating terms of this type can be declared by means of recursive
equations.  For example, one can define a program computing the
addition of two natural numbers by the following two equations:
\begin{equation*}
  \begin{split}
    & \texttt{add zero x = x} \\
    & \texttt{add (succ x) y = succ (add x y)}
  \end{split}
\end{equation*}
Verifying the correctness of programs manipulating this data type
usually amounts to proving the satisfiability of a (possibly
quantified) formula in the theory of this term algebra. In the case of
the program defined above, a simple property that one might want to
check is that adding a non-zero natural number to another results in a
number that is also different from zero:
\begin{equation*}
  \texttt{x $\neq$ zero $\lor$ y $\neq$ zero $\Rightarrow$ add x y $\neq$ zero}
\end{equation*}
Note that depending on the semantics of the programming language,
there may exist cyclic terms such as the one satisfying the equation
$x \approx succ(x)$, or even infinite terms, but in a strictly
evaluated language, only finite non-cyclic terms lead to terminating
programs. Since program verification is in general concerned with
program safety and termination, it is desirable to consider in
particular the theory of finite term algebras, denoted by $\ft$ in the
sequel.

The full first-order fragment of $\ft$ is known to be
decidable~\cite{malcev1962axiomatizable}. One may hence hope to easily
automate the process of reasoning about properties of programs
manipulating algebraic data types, such as lists and trees,
corresponding to term algebras. However, properties of such programs
are not confined strictly to $\ft$ for the following reasons: program
properties typically include arbitrary function and predicate symbols
used in the program, and they may also involve other theories, for
example the theory of integer/real arithmetic. Decidability of $\ft$
is however restricted to formulas that only contain term algebra
symbols, that is, uninterpreted functions, predicates and other theory
symbols cannot be used. If this is not the case, non-linear arithmetic
could trivially be encoded in $\ft$, implying thus the undecidability
of $\ft$.  Due to the decidability requirements of $\ft$ on the one
hand, and the logical structure of general program properties over
term algebras on the other hand, decision procedures based
on~\cite{malcev1962axiomatizable} for reasoning about programs
manipulating algebraic data cannot be simply used. For the purpose of
proving program properties with symbols from $\ft$, one needs more
sophisticated reasoning procedures in extensions of $\ft$.

For this purpose, the works
of~\cite{barrett2007decision,reynolds2015codatatypes} introduced
decision procedures for various fragments of the theory of term
algebras; these techniques are implemented as satisfiability modulo
theory (SMT) procedures, in particular in the CVC4 SMT
solver~\cite{CVC4}. However, these results target mostly reasoning in
quantifier-free fragments of term algebras.  To address this challenge
and provide efficient reasoning techniques with both quantifiers and
term algebra symbols, in this paper we propose to use first-order
theorem provers. We describe various extensions of the superposition
calculus used by first-order theorem provers and adapt the saturation
algorithm of theorem provers used for proof search.

Theory-specific reasoning in saturation-based theorem provers is
typically conducted by including the theory axioms in the set of input
formulas to be saturated. Unfortunately a complete axiomatization of
the theory of term algebras requires an infinite number of sentences:
the \emph{acyclicity rule}, which ensures that a model does not
include cyclic terms, is described by an infinite number of
inequalities $x \not\approx f(x)$, $x \not\approx f(f(x)),
\dots$ This property of term algebras prevents us from performing
theory reasoning in saturation-based proving in the usual
way.

As a first attempt to remedy this state of affairs, in this paper we
present a conservative extension of the theory of term algebras that
uses a finite number of sentences (Section~\ref{sec:extension}). This
extension relies on the addition of a predicate to describe the
``proper subterm'' relation between terms. This approach is complete
and can easily be used in any first-order theorem prover without any
modification.

Unfortunately, the subterm relation is transitive, so that the number
of predicates produced by saturation quickly becomes a burden for any
prover. To improve the efficiency of the reasoning, we offer an
alternative solution: extending the inference system of the saturation
theorem prover with additional rules to treat equalities between terms
(Section~\ref{sec:calculus}).

We implemented our new inference system, as well as the subterm
relation, in the first-order theorem prover
Vampire~\cite{Vampire13}. We tested our implementation on two sets of
benchmarks. We used 4170 problems describing properties of functional
programs manipulating algebraic data types; these problems were taken
from~\cite{reynolds2015codatatypes}.  This set of examples were
generated using the Isabelle inductive theorem prover~\cite{Isabelle}
and translated by the Sledgehammer
system~\cite{blanchette2013sledgehammer}.  Further, we also used
problems from~\cite{colmerauer2000expressiveness} with many quantifier
alternations over term algebras. When compared to state-of-the-art SMT
solvers, such as CVC4 and Z3~\cite{Z3}, our experimental results give
practical evidence of the efficiency and logical strength of our work:
many hard problems that could not be solved before by any existing
technique can now be solved by our work (see
Section~\ref{sec:experiments}).

\noindent{\bf Contributions.} The main contributions of our paper are
summarized below.

\begin{itemize}
\item We extend the theory $\ft$ of finite term algebras with a
  subterm relation denoting proper subterm relations between terms. We
  call this extension $\ftext$ and prove that $\ftext$ is a
  conservative extension of $\ft$. When compared to $\ft$, the
  advantage of $\ftext$ is that it is finitely axiomatizable and hence
  can be used by any first-order theorem prover. Moreover, one can
  combine $\ftext$ with other theories, going even to undecidable
  fragments of the combined theory of term algebras and other
  theories. As an important consequence of this conservative
  extension, our work yields a superposition-based decision procedure
  for term algebras (Section~\ref{sec:extension}).

\item We show how to optimize superposition-based first-order
  reasoning using new, term algebra specific, simplification rules,
  and an incomplete, but simple, replacement for a troublesome
  acyclicity axiom. Our new inference system provides an alternative
  and efficient approach to axiomatic reasoning about term algebras in
  first-order theorem proving and can be used with combinations of
  theories (Section~\ref{sec:calculus}).

\item We implement our work in the first-order theorem prover
  Vampire. Our works turns Vampire into the first first-order theorem
  prover able to reason about term algebras, and therefore about
  algebraic data types. Our experiments show that our implementation
  outperforms state-of-the-art SMT solvers able to reason with
  algebraic data types. For example, Vampire solved 50 SMTLIB problems
  that could not be solved by any other solver before
  (Section~\ref{sec:experiments}).
\end{itemize}

\section{Preliminaries}
\label{sec:prelim}
We consider standard first-order predicate logic with equality. The
equality symbol is denoted by $\eql$.  We allow all standard boolean
connectives and quantifiers in the language. We assume that the
language contains the logical constants $\top$ for always true and
$\bot$ for always false formulas.

Throughout this paper, we denote terms by $r, s, u, t$, variables by
$x, y, z$, constants by $a, b, c, d$, function symbols by $f, g$ and
predicate symbols by $p, q$, all possibly with indices. We consider
equality $\eql$ as part of the language, that is, equality is not a
symbol. For simplicity, we write {$s \neql t$} for the formula
$\neg(s \eql t)$.

An \emph{atom} is an equality or a formula of the form $p(\xone{t}{n})$,
where $p$ is a predicate symbol and $\xone{t}{n}$ are terms. A \emph{literal} is
an atom $A$ or its negation $\neg A$. Literals that are atoms are called
\emph{positive}, while literals of the form $\neg A$ are \emph{negative}. A
\emph{clause} is a disjunction of literals $L_1 \vee \ldots \vee L_n$, where $n
\geq 0$. When $n=0$, we will speak of the {empty clause}, denoted by
{$\emptyclause$}. The empty clause is always false.

We denote atoms by $A$, literals by $L$, clauses by $C,D$, and
formulas by $F,G$, possibly with indices.

Let $F$ be a formula with free variables $\bar{x}$, then $\forall F$
(respectively, $\exists F$) denotes the formula $(\forall \bar{x}) F$
(respectively, $(\exists \bar{x}) F$). A formula is
called \emph{closed}, or a \emph{sentence}, if it has no free
variables. A formula or a term is called \emph{ground} if it has no
occurrences of variables.

A \emph{signature} is any finite set of symbols.  The \emph{signature
of a formula} $F$ is the set of all symbols occurring in this
formula. For example, the signature of $(\forall x) b \eql g(x)$ is
$\setof{g,b}$. When we speak about a \emph{theory}, we either mean a
set of all logical consequences of a set of formulas
(called \emph{axioms} of this theory), or a set of all formulas valid
on a class of first-order structures. Specifically, we are interested
in the theories of term algebras, in which case we use the second
meaning. When we discuss a theory, we call symbols occurring in the
signature of the theory \emph{interpreted}, and all other
symbols \emph{uninterpreted}.

By an \emph{expression $E$} we mean a term, atom, literal, or clause. A
\emph{substitution} $\theta$ is a finite mapping from variables to terms.
An \emph{application} of this substitution to an expression (e.g. a
term or a clause) $E$, denoted by {$E\theta$}, is the expression
obtained from $E$ by the simultaneous replacement of each variable $x$
in it, such that $\theta(x)$ is defined, by $\theta(x)$.  We write
$E[s]$ to mean an expression $E$ with a particular occurrence of a
term $s$.  A \emph{unifier} of two expressions $E_1$ and $E_2$ is a
substitution $\theta$ such that $E_1\theta=E_2\theta$. It is known
that if two expressions have a unifier, then they have a so-called
\emph{most general unifier (mgu)} -- see \cite{Robinson65} for details on
computing mgus.

\section{Superposition and Proof Search}
\label{sec:inference}
We now recall some terminology related to inference systems and
first-order theorem proving.  Inference systems are used in the theory
of
superposition~\cite{NieuwenhuisRubio:HandbookAR:paramodulation:2001}
implemented by several leading automated first-order theorem provers,
including Vampire~\cite{Vampire13} and E~\cite{E:Schulz:2002}.  The
material of this section is based on~\cite{Vampire13}, adapted to our
setting.

\subsection{The Superposition Inference System}

\begin{figure*}
  \begin{itemize}
  \item Resolution
    \[
    \AxiomC{$\underline{A} \lor C_1$}
    \AxiomC{$\underline{\lnot A'} \lor C_2$}
    \BinaryInfC{$(C_1\lor C_2)\sigma$}
    \DisplayProof
    \hspace{6em}
    \AxiomC{$\underline{s \not\approx s'} \lor C$}
    \UnaryInfC{$C \theta$}
    \DisplayProof
    \]
    where $A$ is not an equality predicate, $\sigma = mgu(A, A')$ and $\theta = mgu(s, s')$

  \item Superposition
    \[
    \AxiomC{$\underline{l \approx r} \lor C_1$}
    \AxiomC{$\underline{L \lbrack l' \rbrack} \lor C_2$}
    \BinaryInfC{$(C_1 \lor L \lbrack r \rbrack \lor C_2) \theta$}
    \DisplayProof
    \hspace{3em}
    \AxiomC{$\underline{l \approx r} \lor C_1$}
    \AxiomC{$\underline{t \lbrack l' \rbrack \approx t'} \lor C_2$}
    \BinaryInfC{$(C_1 \lor t \lbrack r \rbrack \approx t' \lor C_2) \theta$}
    \DisplayProof
    \hspace{3em}
    \AxiomC{$\underline{l \approx r} \lor C_1$}
    \AxiomC{$\underline{t \lbrack l' \rbrack \not\approx t'} \lor C_2$}
    \BinaryInfC{$(C_1 \lor t \lbrack r \rbrack \not\approx t' \lor C_2) \theta$}
    \DisplayProof
    \]
    where $l'$ not a variable, $L$ is not an equality,
    $\theta = mgu(l, l')$, $l \theta \npreceq r \theta$ and $t \lbrack
    l' \rbrack \theta \npreceq t'\theta$

  \item Factoring
    \[
    \AxiomC{$\underline{A} \lor \underline{A'} \lor C$}
    \UnaryInfC{$(A \lor C)\sigma$}
    \DisplayProof
    \hspace{6em}
    \AxiomC{$s \approx t \lor \underline{s' \approx t'} \lor C$}
    \UnaryInfC{$(s \approx t \lor t \not\approx t' \lor C)\theta$}
    \DisplayProof
    \]
    where $\sigma = mgu(A, A')$, $\theta = mgu(s, s')$, $s\theta
    \npreceq t\theta$ and $t\theta \npreceq t'\theta$
  \end{itemize}
  \caption{The superposition calculus $\Sup$\label{fig:sup}.}
\end{figure*}

First-order theorem provers perform inferences
on clauses. 
An \emph{inference rule} is an $n$-ary relation on
formulas, where $n \geq 0$. The elements of such a relation are called
\emph{inferences} and usually written as:
\[
\infer[.]{
  C
}{
  C_1 & \ldots & C_{n}
}
\]
The clauses $\xone{C}{n}$ are called the \emph{premises} of this
inference, whereas the clause $C$ is the \emph{conclusion} of the
inference.  An \emph{inference system} $\isS$ is a set of inference
rules. An \emph{axiom} of an inference system is any conclusion of an
inference with $0$ premises.  Any inferences with $0$ premises and a
conclusion $C$ will be written without the bar line, simply as $C$.

Modern first-order theorem provers use and implement the
\emph{superposition inference system}, which is parametrized by a
\emph{simplification ordering} $\preceq$ on terms and a
\emph{selection function} on clauses.

An ordering $\preceq$ on terms is called a \emph{simplification
  ordering} if it satisfies the following conditions:
\begin{enumerate}
\item $\preceq$ is \emph{well-founded}: there exists no infinite
  sequence of terms $t_0, t_1, \ldots$ such that $t_0\preceq
  t_1\preceq\ldots$;
\item $\preceq$ is \emph{stable under substitution}: if $s \preceq t$
  then $s\theta \prec t\theta$, for every term $s,t$ and substitution
  $\theta$;
\item $\preceq$ is \emph{monotonic}: if $s \preceq t$ then $l \lbrack
  s \rbrack \preceq l \lbrack t \rbrack $ for all terms $l, s, t$;
\item $\preceq$ has the \emph{subterm property}: if $s$ is a proper
  subterm of $t$, then $s \preceq t$.
\end{enumerate}
Given two terms $s\preceq t$, we say that $s$ is smaller than $t$ and
$t$ is larger/greater than $s$ wrt $\preceq$. This ordering $\preceq$
can be extended to literals and clauses.

A \emph{selection function} selects in every non-empty clause a
non-empty subset of literals. In the following, we underline literals
to indicate that they are selected in a clause; that is we write
$\underline{L} \lor C$ to denote that the literal $L$ is selected. A
selection function is said to be \emph{well-behaved} if in a given
clause it selects either a negative literal or all the maximal
literals wrt the simplification ordering $\preceq$.
 
We now fix a simplification ordering $\preceq$ and a well-behaved
selection function and define the \emph{superposition inference
  system}. This inference system, denoted by $\Sup$, consists of the
inference rules of Figure~\ref{fig:sup}. The inference system $\Sup$
is a sound and refutationally complete inference system for
first-order logic with equality. By refutational completeness we mean
that if a set $S$ of formulas is unsatisfiable, then $\emptyclause$ is
derivable from $S$ in $\Sup$.

\subsection{Proof Search by Saturation}

Superposition theorem provers implement proof-search algorithms in
$\Sup$ using so-called \emph{saturation algorithms}, as follows. Given
a set $S$ of formulas, superposition-based theorem provers try to
saturate $S$ with respect to $\Sup$, that is build a set of formulas
that contains $S$ and is closed under inferences in $\Sup$.  At every
step, a saturation algorithm selects an inference of $\Sup$, applies
this inference to $S$, and adds conclusions of the inferences to the
set $S$. If at some moment the empty clause $\emptyclause$ is
obtained, by soundness of $\Sup$, we can conclude that the input set
of clauses is unsatisfiable.  To ensure that a saturation algorithm
preserves completeness of $\Sup$, the inference selection strategy
must be {fair}: every possible inference must be selected at some step
of the algorithm. A saturation algorithm with a fair inference
selection strategy is called a \emph{fair saturation algorithm}.

A naive implementation of fair saturation algorithms based on $\Sup$
will not yield however an efficient theorem prover. This is because at
every step of the saturation algorithm, the number of clauses in the
set $S$ of clauses, representing the proof-search space,
grows. Therefore, for the efficiency of organizing proof search, one
needs to use the notion of \emph{redundancy}, which allows to delete
so-called {redundant clauses} during saturation from the search space.
A clause $C\in S$ is \emph{redundant} in $S$ if it is a logical
consequence of those clauses in $S$ that are strictly smaller than $C$
w.r.t. the simplification ordering $\preceq$. In a nutshell,
saturation algorithms using redundancy not only generate but also
delete clauses from the set $S$ of clauses.  Deletion of redundant
clauses is desirable since every deletion reduces the search space.
If a newly generated clause $C'$ during one step of saturation makes
some clauses in $S$ redundant, adding $C'$ to the search space will
remove other (more complex) clauses from $S$. This observation is
exploited by first-order theorem provers in the process of
prioritizing inferences during inference selection, giving rise to
so-called \emph{simplifying} and \emph{generating} inferences.
Simplifying inferences make one or more clauses in the search space
redundant and thus delete clauses from the search space. That is, an
inference
\[
\infer[.]{ C }{ C_1 & \ldots & C_{n} }
\]
is called \emph{simplifying} if at least one of the premises $C_i$
becomes redundant (and deleted) after the addition of the conclusion
$C$ to the search space.  Inferences that are not simplifying are
\emph{generating}: instead of simplifying clauses in the search space,
they generate and add a new clause to the search space.  Efficient
saturation algorithms exploit simplifying and generating inferences,
as follows: from time to time provers try to search for simplifying
inferences at the expense of delaying generating inferences.

\section{The Theory of Finite Term Algebras}
\label{sec:theory}
A definition of the first-order theory of term algebras over a finite
signature can be found in
e.g.~\cite{RybinaVoronkov:termalgebras:2001}, along with an
axiomatization of this theory and a proof of its completeness. In this
section we overview this theory and known results about it.

\subsection{Definition}

Let $\Sigma$ be a finite set of function symbols containing at least
one constant. Denote by $\gtrm$ the set of all ground terms built from
the symbols in $\Sigma$.

The \emph{$\Sigma$-term algebra} is the algebraic structure whose
carrier set is $\gtrm$ and defined in such a way that every ground
term is interpreted by itself (we leave details to the reader). We
will sometimes consider extensions of term algebras by additional
symbols. Elements of $\Sigma$ will be called \emph{term constructors}
(or simply just \emph{constructors}), to distinguish them from other
function symbols. The $\Sigma$-term algebra will also be denoted by
$\gtrm$.

Consider the following set of formulas.

\begin{equation}\label{eqn:exhaust}
  \bigvee_{f \in \Sigma} \exists \overline{y} \; (x \eql f(\overline{y})) \tag{A1}
\end{equation}

\begin{equation}\label{eqn:dist}
  f(\overline{x}) \neql g(\overline{y}) \tag{A2}
\end{equation}
for every $f, g \in \Sigma$ such that $f \neq g$;

\begin{equation}\label{eqn:inj}
  f(\overline{x}) \eql f(\overline{y}) \rightarrow \overline{x} \eql \overline{y} \tag{A3}
\end{equation}
for every $f \in \Sigma$ of arity $\geq 1$;

\begin{equation}\label{eqn:acyclic}
  t \neql x \tag{A4}
\end{equation}
for every non-variable term $t$ in which $x$ appears.

Some of these formulas contain free variables, we assume that they are
implicitly universally quantified.

Axiom (\ref{eqn:exhaust}), sometimes called the domain closure axiom,
asserts that every element in $\Sigma$ is obtained by applying a term
constructor to other elements.

Axiom (\ref{eqn:inj}) describes the injectivity of term constructors,
while axiom (\ref{eqn:dist}) expresses the fact that terms constructed
from different constructors are distinct. Throughout this paper, we
refer to (\ref{eqn:dist}) as the distinctness axiom and to
(\ref{eqn:inj}) as the injectivity axiom.

The axiom schema (\ref{eqn:acyclic}), called the acyclicity axiom,
asserts that no term is equal to its proper subterm, or in other words
that there exist no cyclic terms.

In the following sections we will also discuss theories in which there
are non-constructor function symbols. Note that when we deal with such
theories, the acyclicity axioms are used only when all symbols in $t$
are constructors.

\subsection{Known Results}

We denote by $\ft$ the theory axiomatized by
(\ref{eqn:exhaust})--(\ref{eqn:acyclic}), that is, the set of logical
consequences of all formulas in
(\ref{eqn:exhaust})--(\ref{eqn:acyclic}). Note that the $\Sigma$-term
algebra is a model of all formulas
(\ref{eqn:exhaust})--(\ref{eqn:acyclic}), and therefore also a model
of $\ft$.

\begin{theorem}\label{thm:completeness}%
  The following results hold.
  \begin{enumerate}
  \item $\ft$ is complete. That is, for every sentence $F$ in the
    language of $\gtrm$, either $F \in \ft$ or $(\neg F) \in \ft$.
  \item $\ft$ is decidable.
  \item If $\Sigma$ contains at least one symbol of arity $> 1$, then
    the first-order theory of $\ft$ is non-elementary.
  \end{enumerate}
\end{theorem}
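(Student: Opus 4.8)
The plan is to prove parts~1 and~2 together by a quantifier-elimination argument and to handle part~3 separately as a complexity lower bound. Since all three statements are classical, I would at several points invoke the standard treatments (Mal'cev~\cite{malcev1962axiomatizable}, and the axiomatization-and-completeness development of Rybina and Voronkov~\cite{RybinaVoronkov:termalgebras:2001}) rather than reproduce every calculation; the account below is the shape of those proofs.

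For completeness, the key observation is that the $\Sigma$-term algebra $\gtrm$ is a model of the domain closure, distinctness and injectivity axioms and of the acyclicity schema, so it suffices to show that these axioms decide every sentence, i.e.\ that every model of $\ft$ is elementarily equivalent to $\gtrm$. I would obtain this by quantifier elimination in an expanded language: for each constructor $f \in \Sigma$ of arity $n$ add a unary ``tester'' predicate $\mathrm{is}_f$ and selector functions $s^1_f,\dots,s^n_f$, and --- when $\Sigma$ contains a symbol of arity $\geq 2$, so that $\gtrm$ is branching and has infinitely many ground terms of each size --- add a further family of unary predicates $\tau_k$ ($k\in\mathbb{N}$) recording that $x$ has at least $k$ distinct subterms of a prescribed shape, together with their defining axioms. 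One then shows, by induction on formula structure, that any $\exists x\,\varphi$ with $\varphi$ a conjunction of literals of the expanded language is equivalent, modulo the expanded axioms, to a quantifier-free formula: an equation on $x$ either pins $x$ down (and is substituted away) or, by injectivity and acyclicity, is contradictory, while the surviving disequations on $x$ are jointly satisfiable precisely when enough elements of the relevant shape exist --- which is exactly what the $\tau_k$ express. The base case of ground sentences reduces to deciding equalities between ground terms, which the four axioms settle directly.

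Decidability then follows in either of two ways. The elimination procedure just sketched is effective and so already decides $\ft$; alternatively, $\ft$ is axiomatized by a decidable set of sentences (the acyclicity schema is plainly decidable), hence is recursively enumerable, and a complete recursively enumerable theory is decidable, since to decide a sentence $F$ one enumerates proofs of $F$ and of $\neg F$ in parallel and one of the two searches must halt. For part~3, fix a constructor $c\in\Sigma$ of arity $\geq 2$; together with a constant of $\Sigma$ it provides a pairing operation, hence a coding of finite tuples and sequences of ground terms by ground terms. Using this coding and the Fischer--Rabin/Meyer technique of ``short formulas for large objects'' --- repeatedly reusing a single equality to duplicate structure --- one writes first-order formulas of size $O(n)$ that force a variable to denote a balanced tree whose height is a tower of exponentials in $n$, and thereby reduces a problem known to require non-elementary time (for instance a suitably padded tiling problem) to $\ft$.

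The main obstacle is the quantifier-elimination step in the branching case: identifying exactly which auxiliary predicates $\tau_k$ are needed, and verifying that, once they are present, a single existential over a conjunction of literals can always be removed. The analogous difficulty for part~3 is making the encoding tight, i.e.\ forcing an iterated-exponential blow-up while keeping formula length linear. Both points are standard, so I would rely on the cited proofs for the full details rather than redevelop them here.
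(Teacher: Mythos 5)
Your proposal follows essentially the same route as the paper, which itself defers to the literature: completeness via the quantifier-elimination argument of Mal'cev and Rybina--Voronkov, decidability from completeness plus a recursive (r.e.) axiomatization by running proof searches for $F$ and $\neg F$ in parallel, and non-elementarity from the existence of a pairing function via the Ferrante--Rackoff result. Your sketch just unpacks the cited proofs in slightly more detail (modulo the minor slip that a finite signature yields only finitely many ground terms of each size), so there is nothing substantive to correct.
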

Completeness of $\ft$ is proved in a number of papers - a detailed
proof can be found in, e.g., \cite{RybinaVoronkov:termalgebras:2001}.

Decidability of $\ft$ in Theorem~\ref{thm:completeness} is implied by
the completeness of $\ft$ and by the fact that $\ft$ has a recursive
axiomatization. More precisely, completeness gives the following
(slightly unusual) decision procedure: given a sentence $F$, run any
complete first-order theorem proving procedure (e.g., a complete
superposition theorem prover) simultaneously and separately on $F$ and
$\neg F$. We can get around the problem that the axiomatisation is
infinite but throwing in axioms, one after one, while running the
proof search --- indeed, by the compactness property of first-order
logic, if a formula $G$ is implied by an infinite set of formulas, it
is also implied by a finite subset of this set. One of contributions
of this paper is showing how to avoid dealing with infinite
axiomatizations.

Further, the non-elementary property of $\ft$ in
Theorem~\ref{thm:completeness} follows from a result
in~\cite{FerranteRackoff}: every theory in which one can express a
pairing function has a hereditarily non-elementary first-order theory.

Note that the completeness of $\ft$ implies that $\ft$ is exactly the
set of all formulas true in the $\Sigma$-term algebra. First-order
theories of term algebras are closely related to non-recursive logic
programs, for related complexity results, also including the case with
only unary functions, see~\cite{VoVo98}.

Let us make the following important observation. The decidability and
other results of Theorem~\ref{thm:completeness} do not hold when
uninterpreted functions or predicates are added to $\ft$. If we add to
the $\Sigma$-term algebra uninterpreted symbols, one can for example
use these symbols to provide recursive definitions of addition and
multiplication, thus encoding first-order Peano arithmetic.  Using the
same reasoning as in~\cite{KorovinVoronkov:LASCA:2007} one can then
prove the following result.

\begin{theorem}\label{thm:pi11}
  The first-order theory of $\Sigma$-algebras with uninterpreted
  symbols is $\Pi_1^1$-complete, when $\Sigma$ contains at least one
  non-constant.
\end{theorem}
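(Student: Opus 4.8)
The plan is to establish $\Pi_1^1$-completeness by two separate arguments: an upper bound showing the theory is in $\Pi_1^1$, and a lower bound via reduction from a known $\Pi_1^1$-complete problem. For the upper bound, I would observe that truth in the $\Sigma$-term algebra extended with uninterpreted symbols can be expressed as: "for all interpretations of the uninterpreted symbols over the fixed carrier $\gtrm$, the formula holds." Since the carrier $\gtrm$ is a fixed countable structure (indeed recursive), quantifying over interpretations of finitely many uninterpreted function and predicate symbols is a second-order universal quantification over subsets of (powers of) a recursive set, with an arithmetical matrix; this places the theory in $\Pi_1^1$.

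For the lower bound — which is where the real content lies — I would follow the reasoning of~\cite{KorovinVoronkov:LASCA:2007} and reduce from the $\Pi_1^1$-complete problem of well-foundedness of recursive relations, or equivalently from the complement of the set of recursive trees with an infinite branch. The key technical step, sketched already in the discussion preceding the theorem, is that with uninterpreted symbols available one can give recursive definitions of addition and multiplication on the "numeral" subterms $0, succ(0), succ(succ(0)), \dots$ (using the constant and some unary constructor guaranteed to exist since $\Sigma$ contains at least one non-constant), thereby interpreting full first-order Peano arithmetic inside the theory. Once arithmetic is available, one can further axiomatize, using additional uninterpreted predicates, the notion of a path through a recursive tree and assert the existence of an infinite such path; the term algebra structure is used to encode finite sequences (via pairing on terms) so that the "infinite path" assertion becomes a genuine $\Sigma^1_1$ statement whose negation is forced by the axioms exactly when the tree is well-founded.

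The main obstacle I anticipate is making the encoding of arithmetic and of the second-order path quantifier robust to the presence of $\ft$'s own axioms — in particular, the domain closure axiom~(\ref{eqn:exhaust}) forces every element to be in the range of some constructor, so the "numerals" do not form a definable initial segment unless one is careful, and one must relativize all arithmetical quantifiers to a unary predicate that the axioms pin down to exactly the numeral subterms. Ensuring that this relativizing predicate is correctly constrained (so that nonstandard elements cannot sneak in and trivialize the $\Pi_1^1$ statement) is the delicate point; the acyclicity and injectivity axioms (\ref{eqn:inj}),~(\ref{eqn:acyclic}) are what make this possible, since they guarantee the numerals are pairwise distinct and genuinely well-ordered by the subterm relation. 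Everything else — the pairing function on terms, the recursive definitions, and the final reduction — is routine once this relativization is set up, and the hypothesis that $\Sigma$ has a non-constant is used precisely to have an infinite supply of distinct numeral subterms.
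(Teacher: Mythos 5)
Your proposal is correct and follows essentially the same route as the paper, which itself only sketches the argument and defers to \cite{KorovinVoronkov:LASCA:2007}: identify the numerals $s^n(0)$ with the natural numbers, pin down addition and multiplication by recursive definitions over uninterpreted function symbols, and then use the universal quantification over interpretations of the uninterpreted symbols as the second-order function quantifier of a $\Pi_1^1$ normal form (your upper-bound observation is the standard one, and your tree/well-foundedness reduction is the step the paper leaves to the reference). One small correction to your ``main obstacle'' paragraph: the theorem concerns the theory of the \emph{standard} $\Sigma$-algebra, i.e.\ the formulas valid on the fixed carrier $\gtrm$ under every interpretation of the uninterpreted symbols, not the deductive closure of (\ref{eqn:exhaust})--(\ref{eqn:acyclic}), so there are no nonstandard elements to guard against; relativization to a numeral predicate is needed only when $\Sigma$ has constructors beyond $0$ and $s$, and what pins that predicate down exactly is the genuine well-foundedness of the term algebra (allowing a fixpoint-style biconditional definition), rather than the acyclicity axiom schema itself.
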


We will not give a full proof of Theorem~\ref{thm:pi11} but refer
to~\cite{KorovinVoronkov:LASCA:2007} for details. Here, we only show
how to encode non-linear arithmetic in $\ft$ using $\Sigma$-term
algebra uninterpreted symbol, which is relatively
straightforward. Assume, without loss of generality, that $\Sigma$
contains a constant $0$ and a unary function symbol $s$
(successor). Then all ground terms, and hence all term algebra
elements are of the form $s^n(0)$, where $n \geq 0$. We will identify
any such term $s^n(0)$ with the non-negative integer $n$.

Add two uninterpreted functions $+$ and $\cdot$ and consider the set
$A$ of formulas defined as follows:
  
\[
\forall x \; (x + 0 = x)
\]

\[
\forall x \forall y \; (s(x) + y = s(x+y))
\]

\[
\forall x \; (x \cdot 0 = 0)
\]

\[
\forall x \forall y \; (s(x) \cdot y = (x \cdot y) + y)
\]

It is not hard to argue that in any extension of the $\Sigma$-algebra
satisfying $A$, the functions $+$ and $\cdot$ are interpreted as the
addition and multiplication on non-negative integers. Let now $G$ be
any sentence using only $+,\cdot,s,0$. Then we have that $A \implies
G$ is valid in the $\Sigma$-algebra if and only if $G$ is a true
formula of arithmetic.

Note that Theorem~\ref{thm:pi11} refers to the theory of algebras,
i.e. the set of formulas valid on $\Sigma$-algebra. In view of this
theorem, with uninterpreted symbols of arity $\geq 1$ in the
signature, this includes more formulas than the set of formulas
derivable from (\ref{eqn:exhaust})--(\ref{eqn:acyclic}).

\subsection{Other Formalizations}

Instead of using existential quantifiers in (\ref{eqn:exhaust}), one
can also use axioms based on destructors (or projection functions) of
the algebra. For all function symbols $f$ of arity $n > 0$ and all $i
= 1,\ldots,n$, introduce a function $p_f^i$. The \emph{destructor
  axioms} using these functions are:

\begin{equation}\label{eqn:exhaust2}
  x \eql f(p_f^1(x), \ldots, p_f^n(x)). \tag{A1'}
\end{equation}

The axiom (\ref{eqn:inj}) can be replaced by the following axioms,
which can be considered as a definition of destructors:
\begin{equation}\label{eqn:proj}
  p_f^i(f(x_1, \ldots,x_i, \ldots, x_n)) \eql x_i \tag{A3'}
\end{equation}
Given the other axioms, (\ref{eqn:inj}) and (\ref{eqn:proj}) are
logically equivalent, but some authors prefer the presentation based
on destructors. Note, however, that the behavior of a destructors
$p_f^i$ is unspecified on some terms.

\subsection{Extension to Many-Sorted Logic}

In practice, it can be useful to consider multiple sorts, especially
for problems taken from functional programming. In this setting, each
term algebra constructor has a type $\tau_1 \times \dots \times \tau_n
\to \tau$. The requirement that there is at least one constant should
then be replaced by the requirement that for every sort, there exists
a ground term of this sort.

We can also consider similar theories, which mix constructor and
non-constructor sorts. That is, some sorts contain constructors and
some do not.

Consider an example with the following term algebra signature:
\[
\Sigma_{\Bin} = \{ \leaf : \tau \to \Bin, \node : \Bin \times \tau \times \Bin \to \Bin \}
\]
This signature defines an algebra of binary trees, where every node
and leaf is decorated by an element of a (non-constructor) sort
$\tau$. In this case term algebra axioms are only using sorts with
constructors. The axioms of this theory of trees, as defined
previously, are shown in Figure~\ref{fig:tree-axioms}.

\begin{figure*}
  \[
  \exists y \big(x \eql \leaf(y)\big) \lor \exists y_1, y_2, y_3 \big(x \eql \node(y_1, y_2, y_3)\big)
  \]

  \[
  \node(x_1, x_2, x_3) \neql \leaf(y_1)
  \]

  \[
  \leaf(x) \eql \leaf(y) \rightarrow x \eql y
  \]

  \[
  \node(x_1, x_2, x_3) \eql \node(y_1, y_2, y_3) \rightarrow x_1 \eql y_1 \land x_2 \eql y_2 \land x_3 \eql y_3
  \]

  \[
  x \neql \node(x, y_1, y_2) \qquad x \neql \node(y_1, y_2, x) \qquad x \neql \node(\node(x, y_1, y_2), y_3, y_4) \qquad \dots
  \]
  \caption{The instantiation of the theory axioms for the signature
    $\Sigma_{\Bin}$\label{fig:tree-axioms}.}
\end{figure*}

\section{A Conservative Extension of the Theory of Term Algebras}
\label{sec:extension}
In this paper we aim to prove theorems in first-order theories
containing constructor-defined types. While in general the theory is
$\Pi_1^1$-complete, we still want to have a method that behaves well
in practice. Our method will be based on extending the superposition
calculus by axioms and/or rules for dealing with term algebra
constructor symbols.

One of the criteria of behaving well in practice is to have a method
that is complete for pure term algebra formulas, that is, without
uninterpreted functions. The immediate idea would be to use the
axiomatization of term algebras consisting of
(\ref{eqn:exhaust})--(\ref{eqn:acyclic}), however this does not work
since there is an infinite number of acyclicity axioms.

In this section we show how to overcome this problem by using an
extension of term algebras by a binary relation $\Sub$, denoting the
proper subterm relation. Let us further denote by $\ftext$ the set of
formulas which contains~\eqref{eqn:exhaust}--\eqref{eqn:inj}, but
replaces the acyclicity axiom~\eqref{eqn:acyclic} by the following
axioms \eqref{eqn:proper}--\eqref{eqn:subacy}:

\begin{equation}\label{eqn:proper}
  \Sub(x_i, f(x_1, \dots, x_i, \dots, x_n)), \tag{B1}
\end{equation}
for every $f \in \Sigma$ of arity $n \geq 1$ and every $i$ such that $n \geq i
\geq 1$.

\begin{equation}\label{eqn:trans}
  \Sub(x,y) \land \Sub(y,z) \rightarrow \Sub(x,z) \tag{B2}
\end{equation}

\begin{equation}\label{eqn:subacy}
  \lnot \Sub(x,x) \tag{B3}
\end{equation}

Intuitively, the predicate $\Sub(s,t)$ holds iff $s$ is a proper
subterm of $t$. Axiom (\ref{eqn:proper}) ensures that this relation
holds for terms $s$ appearing directly under a term algebra
constructor in $t$ , while (\ref{eqn:trans}) describes the
transitivity of the subterm relation and ensures that the relation
also holds if $s$ is more deeply nested in $t$. Axiom
(\ref{eqn:subacy}) asserts that no term may be equal to its own proper
subterm.

We now  observe the following properties of
\eqref{eqn:proper}--\eqref{eqn:subacy}. 

\begin{theorem}\label{thm:conservative}%
  $\ftext$ is a conservative extension of $\ft$, that is:
  \begin{enumerate}
  \item \label{itm:noloss} Every theorem in $\ft$ is a theorem in $\ftext$;
  \item \label{itm:nogain} Every theorem in $\ftext$ that uses only
    symbols from the language of $\ft$ (i.e. not using the predicate
    $\Sub$) is also a theorem of $\ft$.
  \end{enumerate}
\end{theorem}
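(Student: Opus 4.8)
The plan is to prove the two directions separately, with direction~\ref{itm:noloss} being essentially immediate and direction~\ref{itm:nogain} requiring a model-theoretic argument. For \ref{itm:noloss}, it suffices to observe that the acyclicity schema~\eqref{eqn:acyclic} is derivable from \eqref{eqn:proper}--\eqref{eqn:subacy}: for any non-variable term $t$ in which $x$ appears, one shows by induction on the structure of $t$ that $\Sub(x, t)$ follows from repeated use of~\eqref{eqn:proper} (for the immediate subterm step) and~\eqref{eqn:trans} (to compose the chain down to $x$), and then $t \neql x$ follows from~\eqref{eqn:subacy} together with the congruence properties of equality. Hence every axiom of $\ft$ is a theorem of $\ftext$, so every theorem of $\ft$ is a theorem of $\ftext$.

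For direction~\ref{itm:nogain}, I would argue contrapositively via models. Suppose $F$ is a sentence over the language of $\ft$ (not mentioning $\Sub$) with $F \notin \ft$. By completeness of $\ft$ (Theorem~\ref{thm:completeness}, part~1), we in fact have $(\neg F) \in \ft$, so $\neg F$ holds in the $\Sigma$-term algebra $\gtrm$; more robustly, $F \notin \ft$ means there is a model $M$ of the axioms \eqref{eqn:exhaust}--\eqref{eqn:acyclic} in which $F$ is false. The key step is to expand $M$ to a structure $M^+$ for the larger signature by interpreting $\Sub$ as the genuine proper-subterm relation on the carrier of $M$: define $\Sub^{M^+}(a,b)$ to hold iff $a$ is a value obtained as the interpretation of some proper subterm position when $b$ is decomposed via the constructors. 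Concretely, using~\eqref{eqn:exhaust} every element $b$ of $M$ can be written as $f^M(\bar a)$ for some constructor $f$; declare each $a_i$ to be $\Sub$-below $b$, and close under transitivity. One then verifies that $M^+$ satisfies \eqref{eqn:proper} (by construction of the immediate step), \eqref{eqn:trans} (by the transitive closure), and crucially \eqref{eqn:subacy}, i.e. $\Sub^{M^+}$ is irreflexive --- this last point uses that $M$ satisfies the acyclicity schema~\eqref{eqn:acyclic}, which forbids any finite descending constructor-chain from an element back to itself. Since $M^+$ agrees with $M$ on all $\ft$-symbols, $F$ is still false in $M^+$, so $M^+$ is a model of $\ftext$ refuting $F$; therefore $F \notin \ftext$, which is the contrapositive of~\ref{itm:nogain}.

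The main obstacle is the verification that the defined interpretation of $\Sub$ is well-founded enough to validate~\eqref{eqn:subacy}: an arbitrary model $M$ of \eqref{eqn:exhaust}--\eqref{eqn:acyclic} need not be the term algebra itself, and could contain "infinite" or exotic elements, so one must be careful that the relation obtained by iterating constructor-decomposition and taking its transitive closure really is irreflexive. The point is that irreflexivity of this relation is equivalent to the non-existence of a cycle $a = f_1^M(\ldots a' \ldots)$, $a' = f_2^M(\ldots a'' \ldots)$, \dots, $a^{(k)} = a$, and any such cycle would, by substituting back, yield an element equal to a non-variable term in which it properly occurs --- precisely what~\eqref{eqn:acyclic} rules out in $M$. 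Once this is in hand, checking~\eqref{eqn:proper} and~\eqref{eqn:trans} is routine, and the conclusion follows. An alternative, slightly slicker route for~\ref{itm:nogain} is purely proof-theoretic: since $\Sub$ occurs in $\ftext$ only through \eqref{eqn:proper}--\eqref{eqn:subacy} and never in $F$, one could argue that any $\ftext$-derivation of a $\Sub$-free $F$ can be transformed into an $\ft$-derivation by replacing $\Sub$-atoms with their "intended" defined formulas and appealing to the derivability of~\eqref{eqn:acyclic} shown in part~\ref{itm:noloss}; but the model-theoretic argument above is cleaner and avoids reasoning about arbitrary proof objects.
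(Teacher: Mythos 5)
Your proof of part~\ref{itm:noloss} coincides with the paper's: both reduce the claim to deriving each instance of the acyclicity schema~\eqref{eqn:acyclic} from \eqref{eqn:proper}--\eqref{eqn:subacy} by chaining $\Sub$ down to the offending occurrence of $x$. For part~\ref{itm:nogain}, however, you take a genuinely different route. The paper argues: $\ftext$ is consistent because the $\Sigma$-term algebra expanded by the true subterm relation is a model; if $F\in\ftext$, $F\notin\ft$ and $F$ is $\Sub$-free, then by \emph{completeness of $\ft$} (Theorem~\ref{thm:completeness}) we get $\neg F\in\ft\subseteq\ftext$, contradicting consistency. You instead prove the stronger, purely model-theoretic fact that \emph{every} model $M$ of \eqref{eqn:exhaust}--\eqref{eqn:acyclic} expands to a model of $\ftext$: interpret $\Sub$ as the transitive closure of the immediate constructor-decomposition relation (which is well defined by \eqref{eqn:exhaust}--\eqref{eqn:inj}), and check irreflexivity by observing that a finite decomposition cycle $a_0 = f_0^M(\dots a_1 \dots),\ \dots,\ a_k = a_0$ would make $a_0$ equal to the value of a non-variable constructor term containing $a_0$ at a proper position, violating an instance of~\eqref{eqn:acyclic} in $M$. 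This is correct, and you rightly flag the one delicate point, namely that $M$ may be nonstandard so irreflexivity must be argued from the schema rather than from the syntax of genuine terms. The trade-off is instructive: the paper's argument is two lines but leans on the completeness of $\ft$, a nontrivial external result, and for that reason only works because $F$ is a \emph{sentence} in the pure term-algebra language; your expansion argument needs no completeness and would survive in settings where $\ft$ is not complete (e.g.\ with extra uninterpreted symbols around), at the cost of the careful verification of~\eqref{eqn:subacy}. Both are valid proofs of the theorem as stated.
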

\begin{proof}
For \eqref{itm:noloss}, it is enough to prove that every instance of
the acyclicity axiom~\eqref{eqn:acyclic} of $\ft$ is implied by axioms
of $\ftext$. To this end, note that for every term $t$ and its proper
subterm $s$, \eqref{eqn:proper}--\eqref{eqn:trans} imply $\Sub(s,t)$,
so every instance of the acyclicity axiom~\eqref{eqn:acyclic} is
implied by \eqref{eqn:proper}--\eqref{eqn:subacy}.

To prove part \eqref{itm:nogain}, first note that $\ftext$ is
consistent (sound). This follows from the fact that it has a model,
which extends the $\Sigma$-term algebra by interpreting $\Sub$ as the
subterm relation. Now assume, by contradiction, that there is a
sentence $F$ not using $\Sub$ such that $F \in
\ftext$ and $F \not\in \ft$. By the completeness result of
Theorem~\ref{thm:completeness}, we then have $\neg F \in \ft$, which by part
\eqref{itm:noloss} implies $\neg F \in \ftext$. We have both $F \in \ftext$ and
$\neg F \in \ftext$, which contradicts the consistency of $\ftext$.
\end{proof}

Note that the full first-order theory of term algebras with the
subterm predicate is undecidable \cite{venkataraman1987decidability}.

The important difference between $\ft$ and $\ftext$ is
that \emph{$\ftext$ is finitely axiomatizable}. This fact and
Theorem~\ref{thm:conservative} can be directly used to
design \emph{superposition-based proof procedures} for $\ft$, as
follows.  Given a term algebra sentence $F$, we can search for a
superposition proof of $F$ from the axioms of $\ftext$. Such a proof
exists if and only if $F$ holds in the $\Sigma$-term algebra. This
proof procedure can even be turned into a
\emph{superposition-based decision
procedure for $\ft$}, which is based on attempting to prove $F$ and
$\neg F$ in parallel, until one of them is proved, which is guaranteed
by the completeness of $\ft$ from Theorem~\ref{thm:completeness}.

It is interesting that, while proving a formula $F$ with quantifier
alternations in this way, first-order theorem provers will first
skolemize $F$, introducing uninterpreted functions. While the
first-order theory of term algebras with arbitrary uninterpreted
functions is incomplete, our results guarantee \emph{completeness on
formulas with uninterpreted functions obtained by skolemization}. This
is so because skolemization preserves validity and hence, using
Theorem~\ref{thm:conservative}, we conclude completeness on skolemized
formulas with uninterpreted functions.

While it is hard to expect that proving term algebra formulas by
superposition will result in a better decision procedure compared to
those described in the literature, see
e.g.~\cite{colmerauer2000expressiveness}, our approach has the
advantage that it can be combined with other theories and can be used
for proving formulas in undecidable fragments of the full first-order
theory of term algebras. Given a formula containing both constructors,
uninterpreted symbols and possible theory symbols, we can attempt to
prove this formula by adding the axioms of $\ftext$ and then use a
superposition theorem prover.  The results of this section show that
this method is strong enough to prove all (pure) term algebra
theorems. Our experimental results described in
Section~\ref{sec:experiments} give an evidence that it is also
efficient in practice.

The conservative extension $\ftext$ presented above thus allows one to
encode problems in the theory of term algebras and reason about them
using any tool for automated reasoning in first-order logic. However
the transitive nature of the predicate $\Sub$ can impact the
performance of provers negatively. Note that the transitivity axiom
can also be replaced by axioms of the form:

\[
  \Sub(x,x_i) \rightarrow \Sub(x, f(x_1, \dots, x_i, \dots, x_n)).
\]
Using these new axioms will result in fewer inferences during proof
search and a slower growth of the subterm relation, which are
important parameters for the provers' performance.

\section{An Extended Calculus}
\label{sec:calculus}
In this section we describe an alternative way to use superposition
theorem provers to reason about term algebras. Instead of including
theory axioms in the initial set of clauses, we extend the calculus
with inferences rules. This is similar to the way paramodulation is
used to replace the axiomatization of equality, apart from the fact
that we cannot obtain a calculus that is complete.

\subsection{A naive calculus}

In this section we will consider alternatives and improvements to
axiomatizing term algebras. The idea is to add simplification rules
specific to term algebras and replace the troublesome acyclicity axiom
by special purpose inference rules.

The superposition calculus uses term and clause orderings to orient
equalities, restrict the number of possible inferences, and
simplification. The general rule is that a clause in the search space
can be deleted if it is implied by strictly smaller clauses in the
search space.

One obvious idea is to add several simplification rules, corresponding
to applications of resolution and/or superposition to term algebra
axioms.  For example, a clause $f(s) \eql s \lor C$ can be replaced by
a simpler, yet equivalent, clause $C$.  Likewise, the clause $f(s)
\eql f(t) \lor C$ is equivalent, by injectivity of the constructors,
to the clause $s \eql t \lor S$. The clause $s \eql t \lor S$ is also
smaller than $f(s) \eql f(t) \lor C$, so it can replace this clause.

Let us start with examples showing that replacing axioms
by rules can result in incompleteness even in very simple cases.

Take for example two ground unit clauses $f(a) \eql b$ and $g(a) \eql
b$, where all symbols apart from $b$ are constructors. This set of
clauses is unsatisfiable in the theory of term algebras. However, if
we replace the axiom $f(x) \neql g(y)$ by a simplification rule, there
are no inferences that can be done between these clauses (assuming we
are using the standard Knuth-Bendix ordering).

Another example showing that the acyclicity axiom can be hard to drop
or replace is the set of two ground unit clauses $f(a) \eql b$ and
$f(b) \eql a$, where $f$ is a constructor. This set of clauses is also
unsatisfiable in the theory of term algebras, since it implies
$f(f(b))=b$. Similar to the previous example, there is no
superposition inference between these two clauses.

\subsection{The Distinctness Rule}

We implemented an extra simplification and a deletion rule. Such rules
will be denoted using a double line, meaning that the clauses in the
premise are replaced by the clauses in the conclusion.

The simplification rule is 

\begin{prooftree}
  \AxiomC{$f(s) \eql g(t) \lor A$}
  \doubleLine\
  \RightLabel{{\scriptsize Dist-S$^+$},}
  \UnaryInfC{$A$}
\end{prooftree}
where $f$ and $g$ are different constructors. Essentially, it removes
from the clause a literal false in the theory of term algebras.

The deletion rule is

\begin{prooftree}
  \AxiomC{$f(s) \neql g(t) \lor A$}
  \doubleLine\
  \RightLabel{{\scriptsize Dist-S$^-$},}
  \UnaryInfC{$\emptyset$}
\end{prooftree}
where $f$ and $g$ are different constructors. It deletes a theory
tautology.

\subsection{The Injectivity Rule}

There is a simplification rule based on the injectivity axiom
\eqref{eqn:inj}. Suppose that $f$ is a constructor of arity $n >
0$. Then we can use the simplification rule

\begin{prooftree}
  \AxiomC{$f(s_1 \dots s_n) \eql f(t_1,\ldots,t_n) \lor C$}
  \doubleLine
  \RightLabel{.}
  \UnaryInfC{$\begin{array}{c}
                s_1 \eql t_1 \lor C \\
                \cdots \\
                s_n \eql t_n \lor C
              \end{array}$}
\end{prooftree}

One can also note that under some additional restrictions the
following inference

\begin{prooftree}
  \AxiomC{$f(s_1 \dots s_n) \neql f(t_1,\ldots,t_n) \lor C$}
  \doubleLine\
  \UnaryInfC{$s_1 \neql t_1 \lor \ldots \lor s_n \neql t_n \lor C$}
\end{prooftree}
can be considered as a simplification rule too. The restriction is the
clause ordering condition $\{s_1 \neql t_1 \lor \ldots \lor s_n \neql
t_n\} \prec C$.

Note that in both rules the premise is logically equivalent to the
conjunction of the formulas in the conclusion in the theory of term
algebras and all formulas in the conclusion are smaller than the
formula in the premise (subject to the ordering condition for the
second rule).

\subsection{The Acyclicity Rule}

Similar to the distinctness axiom and rules, we can introduce a
simplification and a deletion rule based on the acyclicity
axiom. First, we introduce a notion of a \emph{constructor subterm} as
the smallest transitive relation that each of the terms $t_i$ is a
constructor subterm of $f(t_1,\ldots,t_n)$, where $f$ is a constructor
and $n \geq i \geq 1$. For example, if $f$ is a binary constructor,
and $g$ is not a constructor, then all constructor subterms of the
term $f(f(x,a),g(y))$ are $f(x,a)$, $x$, $a$ and $g(y)$. Its subterm
$y$ is not a constructor subterm. One can easily show that any
inequality $s \neql t$, where $s$ is a constructor subterm of $t$ is
false in any extension of term algebras.

The simplification rule for acyclicity is

\begin{prooftree}
  \AxiomC{$s \eql t \lor A$}
  \doubleLine
  \RightLabel{,}
  \UnaryInfC{$A$}
\end{prooftree}
where $s$ is a constructor subterm of $t$. It deletes from a clause
its literal false in all term algebras.

The deletion rule is

\begin{prooftree}
  \AxiomC{$s \neql t \lor A$}
  \doubleLine
  \RightLabel{,}
  \UnaryInfC{$\emptyset$}
\end{prooftree}
where $s$ is a constructor subterm of $t$.
It deletes a theory tautology.

Further, if we wish to get rid of the subterm relation $\Sub$, we can
use various rules to treat special cases of acyclicity. If we do this,
we will lose completeness even for pure term algebra formulas, but
such a replacement can deal with some formulas more efficiently, while
still covering a sufficiently large set of problems.

One example of such a special acyclicity rule is the following:

\begin{prooftree}
  \AxiomC{$t \eql u \lor A$}
  \UnaryInfC{$s \neql u \lor A$}
\end{prooftree}
where $s$ is a constructor subterm of $t$. Note that this rule is not
a simplification rule, so we do not delete the premise after applying
this rule.

\section{Experimental Results}
\label{sec:experiments}
\subsection{Implementation}

We implemented the subterm relation of Section~\ref{sec:extension} and
simplification rules of Section~\ref{sec:calculus} in the first-order
theorem prover Vampire~\cite{Vampire13}. Note that Vampire behaves
well on theory problems with quantifiers both at the SMT and
first-order theorem proving competitions, winning respectively 5
divisions in the SMT-COMP 2016 competition of SMT
solvers\footnote{\url{http://smtcomp.sourceforge.net/2016/}} and the
quantified theory division of the CASC 2016 competition of first-order
provers~\footnote{\url{http://www.cs.miami.edu/~tptp/CASC/J8/}}.  With
our implementation, Vampire becomes the first superposition theorem
prover able to prove properties of term algebras. Moreover, our
experiments described later show that Vampire outperforms
state-of-the-art SMT solvers, such as CVC4 and Z3, on existing
benchmarks.

Our implementation required altogether about 2,500 lines of
C\texttt{++} code. The new version of Vampire, together with our
benchmark suite, is available for
download\footnote{\url{http://www.cse.chalmers.se/~simrob/tools.html}}.

\subsection{Input Syntax and Tool Usage}

In our work, we used an extended SMTLIB syntax~\cite{SMTLIB} to
describe term constructors. Although not yet part of the official
SMTLIB standard, this syntax is already supported by the SMT solvers
Z3 and CVC4, and its standardization is under consideration.

Our input syntax uses \texttt{declare-datatypes} for declaring an
abstract data type corresponding to a term algebra sort. This
declaration simultaneously adds the term algebra symbols and the
$\Sub$ predicate to the problem signature, adds the distinctness,
injectivity, domain closure and subterm axioms to the input set of
formulas, and activates the additional inferences rules from
Section~\ref{sec:calculus}.  Alternatively, the user can choose not to
activate the inference rules in our implementation. The inclusion of
the $\Sub$ predicate and its axioms, as presented in
Section~\ref{sec:extension}, can also be deactivated.

Note that the SMTLIB syntax also provides the not yet standardized
command \texttt{declare-codatatypes} to declare types of potentially
cyclic or infinite data structures. The theory underlying the
semantics of such types is almost identical to that of finite term
algebras, except that the acyclicity axiom is replaced by a uniqueness
rule that asserts that observationally equal terms are indeed
equal~\cite{reynolds2015codatatypes}. Therefore our calculus
\emph{minus the acyclicity axioms/rules} is an incomplete but sound
inference system for that theory, and users can declare co-algebraic
data types in their problems as well. Like acyclicity, the uniqueness
principle of co-algebras is not finitely axiomatizable.

\subsection{Benchmarks}\label{sec:benchmarks}

We evaluated our implementation on two sets of problems. These
problems included all publicly available benchmarks, as mentioned
below.
\begin{itemize}
\item A (parametrized) game theory problem originally described
  in~\cite{colmerauer2000expressiveness}. This problem relies on the
  term algebra of natural numbers to describe winning and losing
  positions of a game.  It is possible to encode, for a given positive
  integer $k$, a predicate $\mathit{winning}_k$ over positions, such
  that $\mathit{winning}_k(p)$ holds iff there exists a winning
  strategy from the position $p$ in $k$ or fewer moves. The
  satisfiability of the resulting first-order formula can be checked
  by term algebra decision procedures, since it does not use symbols
  other than those of the term algebra, but it includes $2k$
  alternating universal and existential quantifiers. This heavy use of
  quantifiers makes it an interesting and challenging problem for
  provers. An example of this problem encoded in the SMTLIB syntax is
  given in Figure~\ref{fig:smtlib}.

\item Problems about functional programs, generated by the Isabelle
  interactive theorem prover~\cite{Isabelle} and translated by the
  Sledgehammer system~\cite{blanchette2013sledgehammer}. The resulting
  SMTLIB problems include algebraic and co-algebraic data types as
  well as arbitrary types and function symbols, and also some
  quantified formulas. Some of these problems are taken from the
  Isabelle distribution (Distro) and the Archive of Formal Proofs
  (AFP), others from a theory about Bird and Stern–Brocot trees by
  Peter Gammie and Andreas Lochbihler (G\&L). They are representative
  of the kind of problems corresponding to program analysis and
  verification goals. This set of problems originally appeared
  in~\cite{reynolds2015codatatypes} and, to the best of our knowledge,
  represent the set of all publicly available benchmarks on algebraic
  data types.
\end{itemize}

\begin{figure}[tb]
\begin{verbatim}
(declare-datatypes ()
  ((Nat (z) (s (pred Nat)))))

(assert
  (not
    (exists
      ((w1 Nat)) 
      (and
        (or
          (= (s z) (s w1))
          (= (s z) (s (s w1)))
        )
        (forall
          ((l0 Nat))
          (=>
            (or
              (= w1 (s l0))
              (= w1 (s (s l0)))
            )
            false))))))

(check-sat)
\end{verbatim}
\caption{An instance of the game theory problem
  from~\cite{colmerauer2000expressiveness}, encoded in SMTLIB
  syntax. The first command declares a term algebra with a constant
  \texttt{z} and a unary function \texttt{s}; note that the projection
  function \texttt{pred} must also be named. The assertion (starting
  with \texttt{assert}) is a formula corresponding to the negation of
  the predicate $\mathit{winning}_1(s(z))$.}
\label{fig:smtlib}
\end{figure}

\subsection{Evaluation}\label{sec:eval}

Our experiments were carried out on a cluster on which each node is
equipped with two quad core Intel processors running at 2.4 GHz and
24GiB of memory. To compare our work to other state-of-the-art
systems, we include the results of running the SMT solvers Z3 and CVC4
on the Isabelle problems, as previously reported
in~\cite{reynolds2015codatatypes}, and also add the results of running
these two solvers on the game theory problem.

\noindent{\bf Game theory problems.} The times required to solve the
game theory problem for different values of the parameter $k$ are
shown in Table~\ref{tab:game}. The first column indicates the time
required by Vampire using the theory axioms (A) described in
Section~\ref{sec:extension}, and the second and third columns give the
time needed when the simplification rules (R) are also activated in
Vampire (Section~\ref{sec:calculus}). For this particular problem, the
acyclicity rule plays no role in the proof, but in order to assess its
impact on performance, the third column shows the times needed to
solve the problem when the subterm relation axioms (S) are also
included in the input. The fourth and fifth columns of
Table~\ref{tab:game} respectively indicate the times needed by CVC4
and Z3 for solving the corresponding problem. Where no value is given,
the prover was unable to solve the problem. Despite belonging to a
decidable class, this problem is quite challenging for theorem provers
and SMT solvers, which is easily explained by the presence of a
formula with many quantifier alternations. The SMT solver CVC4 is able
to disprove the negated conjecture only for $k = 1$, and Z3 can
disprove it only for $k = 1$ or $k = 2$. SMT solvers can also consider
the (non-negated) conjecture and try to satisfy it, but this does not
produce better results. In comparison, our implementation in Vampire
can solve the problem for $k = 6$, that is for formulas with 12
alternated existential and universal quantifiers, in 8.19
seconds. In~\cite{colmerauer2000expressiveness}, the authors are able
to solve the problem for $k$ as high as 80, using an implementation of
the decision procedure presented in~\cite{dao2000resolution}. However
such a decision procedure would not be able to reason in the presence
of uninterpreted symbols, and therefore its usage is much more
restricted. The results of Table~\ref{tab:game} confirm that
first-order provers can be better suited than SMT solvers for
reasoning about formulas with many quantifiers, despite the various
strategies used for quantifier reasoning in SMT solvers (for example,
by using E-matching~\cite{demoura2007ematching}). Table~\ref{tab:game}
also shows that adding simplification rules as described in
Section~\ref{sec:calculus} improves the behavior of the theorem
prover.

\begin{table}[tb]
  \centering
  \begin{tabular}{|c|c|c|c|c|c|}
    \hline
    $k$ & \begin{tabular}{@{}c@{}}Vampire \\ (A)\end{tabular} & \begin{tabular}{@{}c@{}}Vampire \\ (A+R)\end{tabular} & \begin{tabular}{@{}c@{}}Vampire \\ (A+R+S)\end{tabular} & CVC4 & Z3 \\
    \hline
    1 & 0.01 & 0.01 & 0.01 & 0.01 & 0.01 \\
    2 & 0.01 & 0.01 & 0.01 & 0.01 & 0.01 \\
    3 & 4.98 & 0.18 & 0.66 & -- & -- \\
    4 & 2.21 & 0.32 & 0.63 & -- & -- \\
    5 & 35.16 & 11.17 & 15.40 & -- & -- \\
    6 & 31.57 & 8.19 & 11.33 & -- & -- \\
    7 & -- & -- & -- & -- & -- \\
    \hline
  \end{tabular}
  \caption{Time required to prove unsatisfiability of different
    instances of the game theory problem
    from~\cite{colmerauer2000expressiveness}.}
  \label{tab:game}
\end{table}

\noindent{\bf Isabelle problems about functional programs.} Our
results on evaluating Vampire on the Isabelle problems are shown in
Table~\ref{tab:isabelle}. The problems were translated by Sledgehammer
by selecting some lemmas possibly relevant to a given proof goal in
Isabelle and translating them to SMTLIB along with the negation of the
goal. While the intent of this translation is to produce unsatisfiable
first-order problems, this is not the case for all of the problems
tested here. A few problems are satisfiable and it is likely that many
are unprovable, for example because the lemmas selected by
Sledgehammer are not sufficiently strong to prove the goal. The set of
problems originally included 4170 problems, of which 2869 include at
least one algebraic data type and 2825 include at least one
co-algebraic data type, some problems containing both. In the presence
of co-algebraic data types, CVC4 has a special decision procedure
which replaces the acyclicity rule by a uniqueness rule. In our
implementation, Vampire simply does not add the acyclicity axiom, but
the remaining axioms are added as they hold for co-algebraic data
types as well. Unlike CVC4, Z3 does not support reasoning about
co-algebraic data types.

In order to test the efficiency of our acyclicity techniques on more
examples, we considered problems containing co-algebraic data types:
by replacing them with algebraic data types with similar constructors,
we obtained different problems where the acyclicity principle
applies. Note that not all co-algebraic data type definitions
correspond to a well-founded definition for an algebraic data type:
after leaving these out, we obtained 2112 new problems.

Table~\ref{tab:isabelle} summarizes our results on this set of
benchmarks, using a single best strategy in Vampire. For each solver,
we also show the number of problems solved uniquely only by that
solver.

We also ran Vampire with a combination of strategies with a total time
limit of $120$ seconds.  Table~\ref{tab:distrib} shows the total
number of solved problems, with details on whether the problems
contain only algebraic data types, co-algebraic data types, or both.
Overall, Vampire is able to solve 1785 problems, that is 4,2\% more
that CVC4 and 7,3\% more than Z3, which is a significant
improvement. 50 problems are uniquely solved by Vampire, as listed in
column six Table~\ref{tab:distrib}. When compared to Vampire, only 4
problems were proved by CVC4 alone, while Z3 cannot prove any problem
that was not proved by Vampire -- see columns seven and eight of
Table~\ref{tab:distrib}. Summarizing, Table~\ref{tab:isabelle} shows
that Vampire outperforms the best existing solvers so far. The
experimental results of Tables~\ref{tab:game}-\ref{tab:isabelle}
provide an evidence that our methods for proving properties of
algebraic data types outperform methods currently used by SMT solvers.

\begin{table}[tb]
  \centering
  \begin{tabular}{|c|c|c|}
    \hline
    Prover & Solved & Unique \\
    \hline
    Z3 & 1665 & 5 \\
    CVC4 & 1711 & 12 \\
    Vampire (Best strategy) & 1720 & 31 \\
    \hline
  \end{tabular}
  \caption{Number of problems solved among the 6282 Isabelle problems
    translated by SledgeHammer.}
  \label{tab:isabelle}
\end{table}

\begin{table*}[tb]
  \centering
  \begin{tabular}{|c|c|c|c|c|c|c|c|}
    \hline
    & Total & Vampire & CVC4 & Z3 & Unique-Vampire & Unique-CVC4 & Unique-Z3 \\
    \hline
    Data types only & 3457 & 999 & 956 & 947 & 23 & 0 & 0 \\
    Co-data types only & 1301 & 430 & 415 & 382 & 16 & 2 & 0 \\
    Both & 1524 & 356 & 341 & 334 & 11 & 2 & 0 \\
    \hline
    Union & 6282 & 1785 & 1712 & 1663 & 50 & 4 & 0 \\
    \hline
  \end{tabular}
  \caption{Distribution of solved problems according to the data types
    they feature}
  \label{tab:distrib}
\end{table*}

\subsection{Comparison of Option Values}

We were also interested in comparing how various proof option values
affect the performance of a theorem prover. For the purpose of this
research, the options that we considered are:

\begin{enumerate}
\item the Boolean value selecting whether term algebra rules are used;
\item the value selecting how acyclicity is treated (axioms, rule, or
  none, that is, no acyclicity axioms or rules).
\end{enumerate}

Making such a comparison is hard, since there is no obvious
methodology for doing so, especially considering that Vampire has 64
options commonly used in experiments. The majority of these parameters
are boolean, some are finitely-valued, some integer-valued and some
range over other infinite domains. The method we used was based on the
following ideas. Suppose we want to compare values for an option
$\pi$. Then:

\begin{enumerate}
\item we use a set of problems obtained by discarding problems that
  are too easy or currently unsolvable;
\item we repeatedly select a random problem $P$ in this set, a random
  strategy $S$ and run $P$ on variants of $S$ obtained by choosing all
  possible values for $\pi$ using the same time limit.
\end{enumerate}
We discovered that the results for the term algebra rules are
inconclusive (turning them on or off makes little effect on the
results) and will present the results for the acyclicity option.

Our selected set of problems consisted of 262 term algebra
problems. We made 90,000 runs for each value (off, theory axioms, and
simplification rules), that is, 270,000 tests all together, with the
time limit of 30 seconds. While interpreting the results, it is worth
mentioning the following.

\begin{enumerate}
\item When neither acyclicity rules nor acyclicity axioms are used,
  problems whose proof rely on acyclicity become unsolvable. On the
  other hand, for other problems, this setting results in a smaller
  search space.

\item When acyclicity simplification rules are used, the resulting
  calculus is incomplete even for pure term algebra problems, but the
  subterm relation is not used, which generally means that fewer
  clauses should be generated.
\end{enumerate}

The results of these experiments are shown in
Table~\ref{tab:acyclicity}. We show the total number of successful
runs (out of 90,000) and the number of runs where only one value for
this option solved the problem. Probably the most interesting
observation is that using acyclicity simplification rules
(Section~\ref{sec:calculus}) instead of theory axioms
(Section~\ref{sec:extension}) results in many more problems solved.
This gives us an evidence that the axiomatization based on the subterm
relation results in much larger search spaces. This also means that
the value resulting in an incomplete strategy in this case generally
behaves better.

One should also note the 50 problems solved only when turning
acyclicity off.  This means that even the light-weight rule-based
treatment of acyclicity sometimes results in a large
overhead. Moreover, out of these 50 problems 10 were solved in less
than 1 second.

\begin{table}[tb]
  \centering
  \begin{tabular}{|l|c|c|c|}
    \hline
    & off & axioms & rules \\
    \hline
    Total solved & 2030 & 9086 & 9602 \\
    Solved by only this value & 50 & 70 & 566\\
    \hline
  \end{tabular}
  \caption{Comparison of proof option values for acyclicity in
    Vampire.}
  \label{tab:acyclicity}
\end{table}

\section{Related Work} 
\label{sec:related}
The problem of reasoning over term algebras first appears in the
restricted form of syntactic unification, mentioned
in~\cite{herbrand1930recherches}. The algorithm for syntactic
unification was later described in~\cite{Robinson65}, and later
refined into
quasi-linear~\cite{baxter1976complexity,huet1976resolution,martelli1982efficient}
and linear algorithms~\cite{paterson1976linear}.

The full-first order theory of term algebras over a finite signature
was first studied in~\cite{malcev1962axiomatizable}, where its
decidability was proved by quantifier elimination. Other quantifier
elimination procedures appeared in
\cite{comon1988unification,Maher88,Hodges:ModelTheory,RybinaVoronkov:termalgebras:2001}.
\cite{FerranteRackoff} proved a result implying that the first-order
theory of term algebras is non-elementary. There is a large body of
research on decidability of various extensions of term algebras, which
we do not describe here.

In this paper we do not prove decidability of new theories. However,
we present a new superposition-based decision procedure for
first-order theories of term algebras using a finitely axiomatizable
theory.

Probably the first implementation of a decision procedure for term
algebras is described in~\cite{colmerauer2000expressiveness}.  The
theory of finite or infinite trees is also studied
in~\cite{dao2000resolution} and a practical decision procedure is
given based on rewriting.

Due to recent applications of program analysis, there is now a growing
interest in the automated reasoning community for practical
implementation of term algebras and their combinations with other
theories. A decision procedure for algebraic data types is given in
~\cite{barrett2007decision} and later extended to a decision procedure
for co-algebraic data types in~\cite{reynolds2015codatatypes}. These
decision procedures exploit SMT-style reasoning and are supported by
CVC4. Z3 also supports proving properties about algebraic data
types~\cite{bjorner2013higher}.  Unlike these techniques, our work
targets the full first-order theory of term algebras, with arbitrary
use of quantifiers. Our proof search procedure is based on the
superposition calculus and allows one to prove properties with both
theories and quantifiers.

\section{Conclusion}
\label{sec:conclusion}
We presented two different ways to reason in the presence of the
theory of finite term algebras with a superposition-based first-order
theorem prover. Our first approach is based on a finitely
axiomatizable conservative extension of the theory and can be
implemented in any first-order theorem prover. The second technique
extends the first with the addition of extra inference and
simplification rules having two aims:

\begin{enumerate}
\item simplifying more clauses;
\item replacing expensive subterm-based reasoning about acyclicity by
  light-weight inference rules (though incomplete even without
  uninterpreted functions).
\end{enumerate}
While not as efficient as specialized decision procedures for this
theory, both our techniques allow us to reason about problems that
includes the theory of finite terms algebras and other predicate or
function symbols. We evaluated our work on game theory constraints and
properties of functional program manipulating algebraic data types.

The next natural development would be to extend our approach to the
theories of rational (finite but possibly cyclic) and infinite term
algebras. The notion of co-algebras is also closely related to
possibly infinite terms, with the addition of a uniqueness principle
for cyclic terms. A decision procedure for this theory was included in
the SMT solver CVC4 to decide problems involving co-algebraic data
types~\cite{reynolds2015codatatypes}. Co-algebras are also best suited
to express the semantics of processes and structures involving a
notion of state. Unlike term algebras, co-algebras have been studied
almost exclusively from the point of view of category theory, rather
than that of first-order logic, so that many theoretical and practical
applications remain to be explored there.

An even more interesting avenue to exploit is inductive reasoning
about algebraic data types in first-order theorem proving, also based
on extensions of the superposition calculus.

The work presented here should be a useful development for the
verification of functional programs. For example it would benefit the
tool HALO~\cite{vytiniotis2013halo}, which expresses the denotational
semantics of Haskell programs in first-order logic, before using
automated theorem provers to verify some of their properties. Our work
not only makes the translation easier but also modifies the prover to
make it more efficient on the generated problems. This also applies to
other tools that already use first-order theorem provers to discharge
their proof obligations, such as inductive theorem provers,
e.g.\ HipSpec~\cite{claessen2013automating} and automated reasoning
tools for higher-order logic,
e.g.\ Sledgehammer~\cite{blanchette2013sledgehammer}.

More generally, our work makes an important step towards closing the
gap between SMT solvers and first-order theorem provers. The former
are traditionally used for problems involving theories, while the
latter are better at dealing with quantifiers. Problems that include
both quantifiers and theories are very common in practical
applications and represent a big challenge due to their intrinsic
complexity, both in theory and in practice. Our results show that
first-order theorem provers can perform efficient reasoning in the
presence of theories, solving many problems previously unsolvable by
other tools.






\acks
We acknowledge funding from the ERC Starting Grant 2014 SYMCAR 639270, the Wallenberg Academy
Fellowship 2014 TheProSE, the Swedish VR grant GenPro D0497701, the
Austrian FWF 
research project RiSE S11409-N23,  and the EPSRC grant ReVeS: Reasoning
for Verification and Security.



\end{document}